\documentclass{article}

\usepackage{microtype}
\usepackage{graphicx}
\usepackage{subcaption}
\usepackage{booktabs}
\usepackage{makecell}
\usepackage{amsmath}
\usepackage{amssymb}
\usepackage{mathtools}
\usepackage[noend]{algpseudocode}
\makeatletter
\@namedef{ver@algorithmic.sty}{}
\makeatother
\usepackage{amsthm}
\usepackage{multicol}
\usepackage{multirow}
\usepackage{inconsolata}
\usepackage{wrapfig}
\usepackage{colortbl}
\usepackage[table]{xcolor}
\usepackage{xspace}
\usepackage{tabularx}
\usepackage{pifont}
\usepackage{etoolbox}
\usepackage{enumitem}

\usepackage[accepted]{icml2026}
\usepackage{xurl}
\usepackage[hidelinks]{hyperref}
\newcommand{\paraf}[1]{\noindent\textbf{#1}}

\newcommand{\sysname}{\textsc{ToolPro}\xspace}

\definecolor{myyellow}{rgb}{.99,.94,.82}

\usepackage[capitalize,noabbrev]{cleveref}

\theoremstyle{plain}
\newtheorem{theorem}{Theorem}[section]
\newtheorem{proposition}[theorem]{Proposition}

\theoremstyle{definition}

\theoremstyle{remark}

\usepackage[textsize=tiny]{todonotes}

\icmltitlerunning{Beyond Static Endpoints: Tool Programs as an Interface for Flexible Agentic Web Services}

\begin{document}

\twocolumn[
  \icmltitle{Beyond Static Endpoints: Tool Programs as an Interface \\for Flexible Agentic Web Services}

  % It is OKAY to include author information, even for blind submissions: the
  % style file will automatically remove it for you unless you've provided
  % the [accepted] option to the icml2026 package.

  % List of affiliations: The first argument should be a (short) identifier you
  % will use later to specify author affiliations Academic affiliations
  % should list Department, University, City, Region, Country Industry
  % affiliations should list Company, City, Region, Country

  % You can specify symbols, otherwise they are numbered in order. Ideally, you
  % should not use this facility. Affiliations will be numbered in order of
  % appearance and this is the preferred way.
  % \icmlsetsymbol{equal}{*}

  \begin{icmlauthorlist}
      \icmlauthor{Mugeng Liu}{cspku}
      \icmlauthor{Shuoqi Li}{smpku}
      \icmlauthor{Yixuan Zhang}{cspku}
      \icmlauthor{Yun Ma}{aipku}
    % \icmlauthor{Firstname1 Lastname1}{equal,yyy}
    % \icmlauthor{Firstname2 Lastname2}{equal,yyy,comp}
    % \icmlauthor{Firstname3 Lastname3}{comp}
    % \icmlauthor{Firstname4 Lastname4}{sch}
    % \icmlauthor{Firstname5 Lastname5}{yyy}
    % \icmlauthor{Firstname6 Lastname6}{sch,yyy,comp}
    % \icmlauthor{Firstname7 Lastname7}{comp}
    % %\icmlauthor{}{sch}
    % \icmlauthor{Firstname8 Lastname8}{sch}
    % \icmlauthor{Firstname8 Lastname8}{yyy,comp}
    %\icmlauthor{}{sch}
    %\icmlauthor{}{sch}
  \end{icmlauthorlist}

  \icmlaffiliation{cspku}{School of Computer Science, Peking University, Beijing, China}

  \icmlaffiliation{smpku}{School of Software \& Microelectronics, Peking University, Beijing, China}

  \icmlaffiliation{aipku}{Institute for Artificial Intelligence, Peking University, Beijing, China}

  % \icmlaffiliation{comp}{Company Name, Location, Country}
  % \icmlaffiliation{sch}{School of ZZZ, Institute of WWW, Location, Country}

  \icmlcorrespondingauthor{Yun Ma}{mayun@pku.edu.cn}
  % \icmlcorrespondingauthor{Firstname2 Lastname2}{first2.last2@www.uk}

  % You may provide any keywords that you find helpful for describing your
  % paper; these are used to populate the "keywords" metadata in the PDF but
  % will not be shown in the document
  \icmlkeywords{Agentic Web, Service Interface, Tool Use}

  \vskip 0.3in
]

% this must go after the closing bracket ] following \twocolumn[ ...

% This command actually creates the footnote in the first column listing the
% affiliations and the copyright notice. The command takes one argument, which
% is text to display at the start of the footnote. The \icmlEqualContribution
% command is standard text for equal contribution. Remove it (just {}) if you
% do not need this facility.

% Use ONE of the following lines. DO NOT remove the command.
% If you have no special notice, KEEP empty braces:

\printAffiliationsAndNotice{}  % no special notice (required even if empty)
% Or, if applicable, use the standard equal contribution text:
% \printAffiliationsAndNotice{\icmlEqualContribution}

\begin{abstract}

In the agentic web era, LLM-based agents increasingly invoke web services as tools, yet most interfaces remain \emph{static endpoints} that poorly express long-horizon workflows with loops, conditionals, joins, and retries. We present \sysname, which represents an agent's tool intent as an \emph{executable tool program} that compactly encodes multi-step service interactions with explicit effect types. \sysname combines constraint-guided program construction, effect-aware replay for exactly-once state-modifying calls, and a profile-driven policy that decides when program execution outperforms stepwise calling. We instantiate \sysname over MCP-style services with WebAssembly sandboxing and evaluate it on diverse workflows of real-world applications. \sysname reduces end-to-end latency by up to 53.4\% and client-side traffic by up to 96.1\%, with larger gains under higher network latency and workflow complexity.

\end{abstract}

\section{Introduction}

LLM-based agents~\cite{yao2023react,schick2023toolformer,liu2024agentbench,qin2024toolllm,liu2026roga} are increasingly expected to complete long-horizon workflows by orchestrating web services.
Yet most services are still exposed through \emph{static API endpoints}---an interface designed for single-shot queries, not for procedural, multi-step interaction.
When a task requires control flow (e.g., loops, conditionals), intermediate bindings, or intent-dependent data access, an agent must externalize the workflow into a brittle sequence of endpoint calls interleaved with multi-round reasoning~\cite{yao2022webshop,deng2023mind2web,zhou2024webarena}.
The stepwise interface scales poorly, multiplying network turns, systematically over- and under-fetching data, and triggering cascading retries with inconsistent side effects upon partial failure.

Our insight is that the inefficiency is fundamentally \emph{representational}, as shown in~\autoref{fig:motivation}.
Endpoint sequences are a weak interface for expressing tool intent, because they fragment a coherent multi-step plan into local decisions conditioned on intermediate responses.
As a result, both client--service round trips and agent reasoning rounds grow with the number of procedural steps, while failures amplify as a single mismatch can cascade into retries and state inconsistencies.
Agentic workflows need an interface that can express \emph{``perform this multi-step interaction''} as a single, composable object whose execution can be delegated, optimized, and checked.

\begin{figure*}[t!]
\centering
\includegraphics[width=0.9\linewidth]{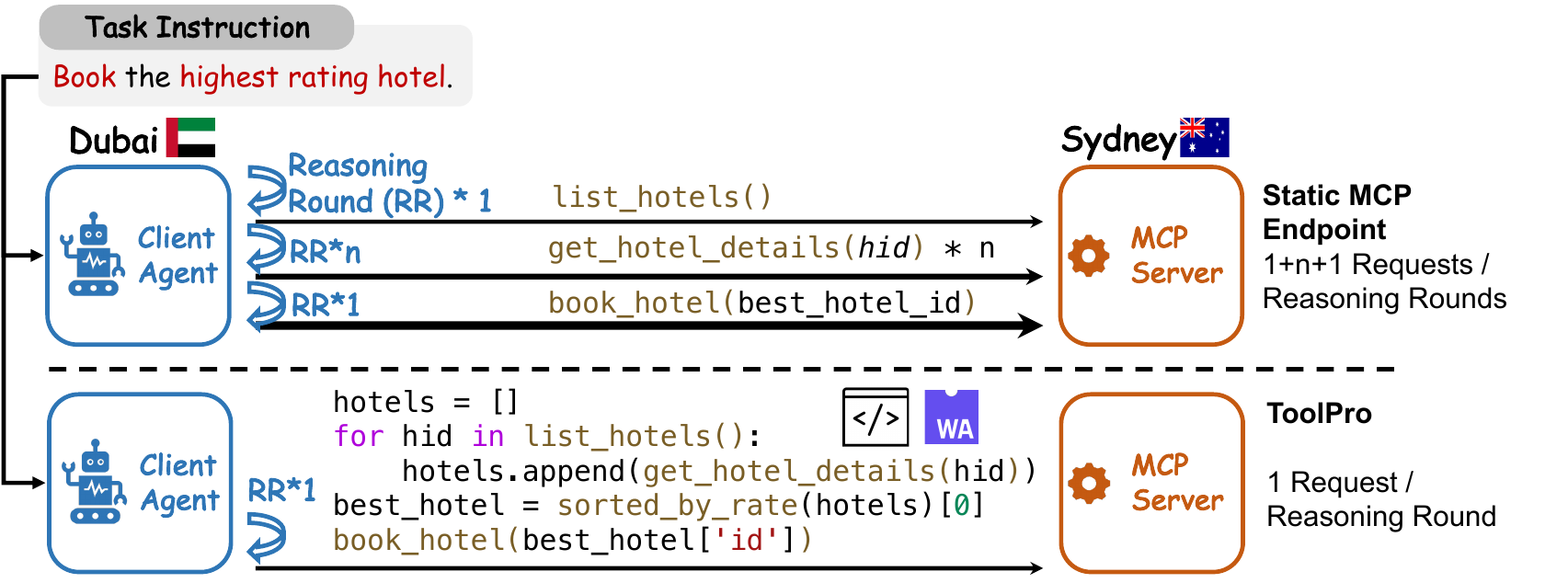}
% \vspace{-0.6em}
\caption{\textbf{From static endpoints to tool programs.}
Stepwise endpoints force the agent to repeatedly call endpoints and re-prompt to realize control flow.
Tool programs (\sysname) instead package a multi-step interaction as one executable object with explicit effects, enabling service-side execution and safe re-execution under repair.}
\label{fig:motivation}
\vspace{-1em}
\end{figure*}

To this end, we propose \emph{tool programs} as an executable representation of tool intent. These programs compactly encode a multi-step service interaction, complete with control flow and intermediate bindings. Furthermore, they feature explicit effect types that distinguish state-preserving READ operations from state-modifying WRITE operations.
Rather than repeatedly selecting endpoints on the client, an agent synthesizes a tool program and delegates its execution to a controlled service-side runtime.
Making intent first-class enables three capabilities that static endpoints do not provide, including
(i) \emph{turn reduction} by consolidating multi-step interactions into fewer network round trips,
(ii) \emph{effect-aware execution} by enforcing different semantics for READ versus WRITE, and
(iii) \emph{safe re-execution} under repair without duplicating side effects.
Realizing tool programs in practice raises three challenges:
(1) Executability. LLM-produced programs may fail to compile or crash at runtime under a typed, sandboxed substrate.
(2) Side effects under repair. Partial success before failure makes naive re-execution duplicate WRITEs and corrupt service state.
(3) When to consolidate. For short workflows or low-latency networks, program construction overhead can outweigh the savings from turn reduction.

By addressing these challenges, we present \sysname, the first agentic web service runtime that operationalizes \textbf{Tool} \textbf{Pro}grams as an interface for flexible agentic web services.
Specifically, to improve executability, \sysname introduces constraint-guided program construction, which combines lightweight formatting constraints with compiler/runtime feedback, and resolves common failures via service-side repair to avoid repeated client--server back-and-forth.
To control side effects, \sysname enforces effect-aware replay that provides \emph{exactly-once} semantics for WRITE operations across iterative repair and re-execution.
To decide when consolidation is worthwhile, \sysname applies a profile-driven consolidation rule that adaptively selects between stepwise calling and program execution.

We implement \sysname over MCP-style web services using WebAssembly sandboxing and evaluate it on diverse workflows within realistic applications. 
\sysname reduces end-to-end latency by up to 53.4\% and client-side traffic by up to 96.1\%, with gains increasing under higher network latency and workflow complexity.
These results highlight that \sysname lays an important foundation for agent-facing service interfaces in the emerging agentic web.

This paper makes the following contributions.\footnote{Code is publicly available at \url{https://github.com/morgen52/toolpro_icml26}.}
% \begin{itemize}[leftmargin=*,topsep=0pt,itemsep=0pt]
\begin{itemize}[leftmargin=*,topsep=2pt,itemsep=2pt,parsep=0pt,partopsep=0pt]
\item We identify static endpoints as a representational bottleneck for agentic web workflows and propose \emph{tool programs} as an agent-facing service interface.
\item We present \sysname to make tool programs practical by addressing core challenges. It employs constraint-guided construction for executability, effect-aware replay with exactly-once semantics for safe repairs, and a profile-driven policy for adaptive consolidation.
\item We evaluate \sysname on real applications and workflows, demonstrating substantial reductions in latency and client-side traffic, highlighting a promising direction to build an efficient agentic web.
\end{itemize}

\section{Problem Formulation}
\label{sec:problem}

We study agentic web-service tool use where a client-side LLM agent orchestrates server-side web services to complete procedural workflows.
A tool-facing service exposes a set of endpoints $\mathcal{E}$ over an internal service state $s$.
A call is $\textsc{Call}(e,a)$ for $e\in\mathcal{E}$ and arguments $a$, returning an output $o$ (or an error) and possibly updating $s$.
We focus on procedural \emph{agentic workflows} that inherently require control flow, intermediate bindings, and intent-dependent data access.

\textbf{Bottleneck: stepwise endpoint sequences.}
With \emph{static endpoints}, an agent realizes an intent via a stepwise interaction loop.
At step $i$, it decides the next call $(e_i,a_i)$ conditioned on the task context and past observations $(o_1,\ldots,o_{i-1})$, then issues the request and observes $o_i$.
This yields a call sequence $\pi=\langle(e_1,a_1),\ldots,(e_N,a_N)\rangle$ interleaving $N$ client--service round trips with $N$ client-side decision rounds.
This interface makes the procedure \emph{reactive}, which fragments a coherent procedure into client-side next-call decisions.
As procedural length grows, this interface (i) inflates latency via repeated RTT and per-step decision overhead, (ii) induces over-/under-fetching because control logic must be implemented outside the service, and (iii) makes recovery brittle, with partial failures triggering retries that can duplicate state-modifying operations and corrupt state.

\textbf{Key idea: tool programs as an interface.}
We propose to make tool intent first-class by representing a workflow as an \emph{executable tool program}.
A tool program $P$ is a procedure whose atomic operations are endpoint invocations $\textsc{Call}(e,a)$, composed with structured control flow and intermediate bindings.
Given an initial state $s$, executing $P$ produces a result $y$ and a (possibly updated) state $s'$.
This shifts the interface from next-call selection to program submission, enabling service-side execution, optimization, and checking while preserving the service's observable behavior.

\textbf{Goals.}
We aim to make tool programs practical \emph{without} changing the observable outcomes of the underlying service interaction and \emph{without} introducing prohibitive overhead compared to stepwise calling.
Concretely, we target
(G1) \emph{effect-safe observable semantics} under failures and retries; and
(G2) \emph{improved end-to-end efficiency} in latency and client-side traffic.

\emph{(G1)}
Executing a tool program may fail (e.g., due to runtime errors in generated code), triggering repair and re-execution.
We aim for an \emph{interface-level} guarantee: re-execution for the same high-level intent must not introduce additional side effects beyond stepwise execution.
Note that we do not attempt to eliminate endpoint-level failures, which are inherent to the underlying service and can equally occur under stepwise execution.

To reason about side effects, we assume each endpoint has an effect label $\mathsf{eff}(e)\in\{\text{READ},\text{WRITE}\}$, distinguishing state-preserving queries from state-modifying operations.
An execution induces a trace $\tau(P)=\langle(e_1,a_1,o_1),\ldots,(e_N,a_N,o_N)\rangle$, where only \text{WRITE} calls may change $s$.
We target two interface-level properties:
(i) \emph{observational equivalence}: conditioned on the same underlying sequence of endpoint outcomes, program execution exposes the same outputs/errors as stepwise execution; and
(ii) \emph{retry safety}: across repair-driven re-executions for the same intent, \text{WRITE} effects are not duplicated, yielding interface-level \emph{exactly-once} semantics.

\emph{(G2)}
Tool programs are beneficial only when their one-time construction cost is amortized by reducing round trips.
A simple latency model contrasts stepwise calling and program execution, as follows:
\begin{equation}
\label{eq:cost}
\textstyle
\begin{aligned}
T_{\textsc{step}} &\approx \sum_{i=1}^{N}\bigl(T_{\textsc{rtt}}+T_{\textsc{dec}}+T_{\textsc{api}}\bigr), \\
T_{\textsc{prog}} &\approx T_{\textsc{build}} + \Bigl(T_{\textsc{rtt}}+\sum_{i=1}^{N}T_{\textsc{api}}\Bigr),
\end{aligned}
\end{equation}
where $T_{\textsc{rtt}}$ is client--service RTT, $T_{\textsc{dec}}$ is per-step decision overhead on the client-side agent, $T_{\textsc{api}}$ is endpoint execution time, and $T_{\textsc{build}}$ includes program construction, compilation, and potential repair.
Client-side traffic follows a similar trade-off. 
Stepwise calling repeatedly transmits requests and prompts across $N$ rounds, whereas tool programs consolidate interaction into a small number of uploads and responses.

\textbf{Challenges.}
This formulation surfaces three challenges in operationalizing tool programs in practice:
(C1) \emph{executability} of LLM-produced programs under a typed, sandboxed substrate;
(C2) \emph{exactly-once effect semantics} for \text{WRITE} calls under repair and re-execution; and
(C3) \emph{adaptive consolidation} to decide when program execution is more efficient than stepwise calling.

\section{\sysname Design}

\sysname makes \emph{tool programs} a first-class service interface for agentic workflows.
Given an intent instance and tool specifications, the agent submits a single effect-typed program $P$; the service-side runtime then compiles, optionally repairs, and executes $P$ as a unit while enforcing an interface contract.
\sysname achieves this interface shift by addressing the three challenges (introduced in \S\ref{sec:problem}) in operationalizing tool programs in practice.

\sysname follows a \emph{synthesize--project--compile--execute} pipeline with a conservative fallback path.
(1) Given task intent and service endpoints, the client synthesizes a candidate tool program and performs lightweight structural checks to reject obviously misaligned programs early.
(2) The server projects the program into a constrained interface-program surface that is analyzable and enforceable, then compiles and executes it in a sandbox.
(3) If compilation or execution fails, the server performs bounded in-place repair using compiler diagnostics and runtime traces as verifiable feedback.
During any execution and re-execution, the runtime mediates external calls and enforces effect-aware replay to prevent duplicated \text{WRITE} effects.
Finally, a profile-driven consolidation policy decides whether to use program execution or revert to stepwise calling for the current task.

\subsection{Tool Programs as an Interface}

\sysname employs the \emph{tool program} as the submitted interface object and the contract that the runtime enforces for any accepted program.

The tool program serves as the interface object, which is the unit of interaction at the interface.
A client represents an agentic workflow as a tool program $P$ (defined in \S\ref{sec:problem}) and submits it to the runtime for execution as a unit.
The only way for $P$ to interact with the underlying service is via a unified stub $\textsc{Call}(e,a)$, where each dynamic call (a runtime instance of $\textsc{Call}$) triggers one endpoint invocation and returns an output (or error) observable to the program.
This makes the workflow logic explicit and inspectable on the server side.

To make such a program $P$ enforceable and safe as an interface, the runtime cannot accept arbitrary code, which motivates a constrained surface.
\sysname restricts submitted programs to a constrained surface that expresses service-interaction logic rather than arbitrary application logic.
Concretely, the runtime enforces: (i) structured control flow only (\texttt{if}/\texttt{else}, \texttt{for}/\texttt{while}); (ii) external interaction only through $\textsc{Call}(\cdot)$ with explicit effect annotations at each call site; (iii) no exceptions, threads, dynamic linking, or unsafe memory operations; and (iv) no ad-hoc networking or filesystem access beyond the tool-facing service boundary.
These restrictions ensure that compilation errors, runtime traces, and call events refer to a stable, analyzable surface that the runtime can repair and mediate reliably.

Moreover, \sysname makes side effects explicit via effect typing. Concretely, every external call site in $P$ must declare an effect label (\text{READ}/\text{WRITE}), which the runtime checks against the endpoint’s declared effect $\mathsf{eff}(e)$. 
These annotations serve as the handle for retry protection. \text{READ} calls may be safely re-issued, whereas completed \text{WRITE} calls must be replay-protected under repair-driven re-execution.

Building on the interface object, constrained surface, and effect-typed boundaries, \sysname enforces the following contract for any well-formed program $P$.
\begin{itemize}[leftmargin=*,topsep=0pt,itemsep=0pt]
\item \textbf{Program order.} \sysname processes dynamic calls in program order and never reorders external invocations.
\item \textbf{Observable preservation.} Conditioned on identical endpoint outcomes (including inherent service errors), \sysname exposes the same per-call outputs/errors as stepwise execution. Across repair-driven re-executions for the same intent instance, completed \text{WRITE} effects are not duplicated at the service boundary.
\item \textbf{Safe fallback.} If construction or repair exceeds the attempt budget, violates constraints, or encounters unsupported \text{WRITE} semantics, \sysname falls back to stepwise endpoint calling and surfaces diagnostics.
\end{itemize}
The contract is realized by three mechanisms, each addressing one of the three challenges (introduced in \S\ref{sec:problem}), as shown in \autoref{fig:overview}.

\begin{figure}[t!]
\centering
\includegraphics[width=0.45\textwidth]{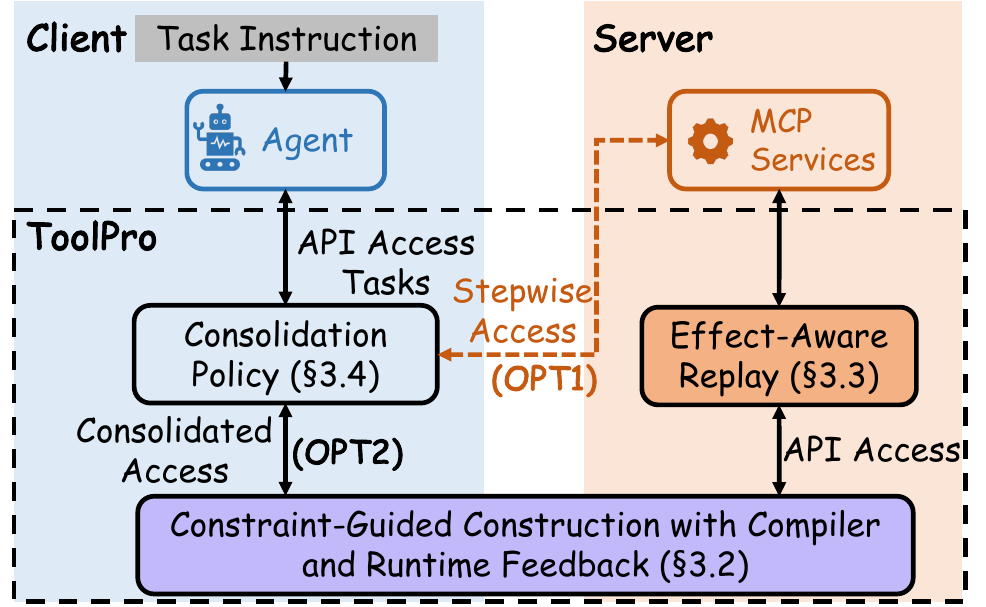}
% \vspace{-1em}
\caption{Three mechanisms of \sysname.}
\label{fig:overview}
% \vspace{-1em}
\end{figure}

\subsection{Constraint-Guided Construction with Compiler and Runtime Feedback}

The first challenge to realize tool programs is to ensure reliable program execution (C1).
In practice, LLM-produced programs are often structurally plausible yet fail under a typed sandbox due to unsupported libraries, type mismatches, missing imports, or brittle error handling.
\sysname addresses this gap with a constraint-guided construction pipeline with four steps, which turns failures into bounded, checkable repairs using compiler and runtime feedback.

\paraf{Step 1: client-side synthesis with lightweight interface checks.}
Given the task intent and tool specifications, the client synthesizes a candidate program $P$ and performs conservative checks that are cheap yet effective at filtering obviously misaligned candidates.
Concretely, it verifies (i) endpoint coverage (required endpoints appear), (ii) a control-flow skeleton consistent with the intent (e.g., a bounded loop and necessary conditionals), and (iii) basic value-flow sanity (outputs of earlier \text{READ} calls can be bound and used as inputs to later calls).
These checks avoid sending clearly ill-formed programs into server-side compilation.

\paraf{Step 2: server-side projection into a constrained surface.}
Upon receiving $P$, the server applies a deterministic projection $\Pi(\cdot)$ to rewrite it into the constrained interface-program surface.
The projection (i) rewrites all external interactions to the unified stub $\textsc{Call}(\cdot)$ with explicit \text{READ}/\text{WRITE} annotations, (ii) removes or replaces unsupported imports and patterns, and (iii) rejects any syntax or language features outside the allowed surface.
The result, a canonical form $\Pi(P)$, ensures that subsequent diagnostics and traces are expressed over a stable surface that the runtime can interpret, enforce, and modify reliably.

\paraf{Step 3: compile and execute with feedback-driven, bounded in-place repair.}
The server then compiles and executes $\Pi(P)$ in a sandbox and uses the resulting feedback to repair and re-run within a fixed attempt budget.
On compilation failure, the compiler returns precise diagnostics that localize missing symbols, type mismatches, and offending code spans.
On runtime failure, the runtime records a lightweight trace that identifies the failing region together with the prefix of dynamic calls already observed.
\sysname applies \emph{in-place} repairs conditioned on this verifiable feedback and re-enters the loop.
It re-projects the revised program if needed, recompiles, and re-executes until success or the budget is exhausted.
Repairs are localized whenever possible.
Invocation errors trigger edits at the relevant call site (endpoint, arguments, or effect annotation), while control-flow/logic errors trigger rewrites of the minimal affected block with the rest preserved.

\paraf{Step 4: Safe fallback with diagnostics.}
If the attempt budget is exceeded or a repair violates the constrained surface, \sysname falls back to stepwise endpoint calling and surfaces the diagnostics collected during steps.
This ensures that program execution is attempted only when it is demonstrably viable, and that failures are surfaced with auditable diagnostics rather than silently ignored.

\subsection{Effect-Aware Replay for Retry-Safe Semantics}

The second challenge (C2) is retry safety under repair-driven re-execution.
A tool program may partially succeed before failing. 
Naively re-running repaired code can re-emit state-modifying \text{WRITE} calls, duplicating side effects and corrupting service state.
\sysname prevents this by mediating runtime calls at the interface boundary and replaying outcomes of completed \text{WRITE} calls across re-executions.

Because retry safety must hold at runtime, \sysname intercepts each dynamic call (a runtime instance of $\textsc{Call}(e,a)$) along the taken control-flow path.
\text{READ} calls are always forwarded to the service, since re-issuing them does not introduce new side effects.
In contrast, \text{WRITE} calls are replay-protected.
Once a \text{WRITE} completes, subsequent re-executions must not re-emit it, while the program continues to observe the same outcome.

To support such replay, \sysname maintains a per-intent-instance log of committed \text{WRITE} outcomes.
Specifically, it keeps (i) an ordered history log $\mathcal{H}$ containing \text{WRITE} calls completed in prior executions, and (ii) a working log $\mathcal{W}$ for \text{WRITE} calls completed in the current execution.
Each entry stores $(e,a,o)$ along with a per-re-execution flag \texttt{used}.
Both logs are scoped to a single intent instance, ensuring replay does not leak across unrelated tasks.

Since replay is only sound when repairs do not attempt to revise already committed effects, \sysname enforces a conservative replay discipline.
If a repaired program changes the arguments or relative order of any committed \text{WRITE} prefix already recorded in $\mathcal{H}$, \sysname disables replay and falls back to stepwise calling.
This turns a potential silent semantic divergence into an explicit, auditable condition.

Under this discipline, replay reduces to a simple matching problem at each \text{WRITE}.
On re-execution, when the program reaches the next dynamic \text{WRITE} with parameters $(e,a)$, \sysname matches it to the earliest unused entry in $\mathcal{H}$ with the same $(e,a)$.
If a match exists, the runtime returns the cached outcome $o$ without issuing the external call; otherwise, it emits the call, obtains $o$, and appends $(e,a,o)$ to $\mathcal{W}$.
Ordered matching treats repeated writes with identical $(e,a)$ as distinct dynamic calls within a run, while still preventing duplication across re-executions.

The log-based replay mechanism is shown in Algorithm~\ref{alg:replay}.
Before each re-execution, the runtime archives the \text{WRITE} calls completed in the last execution by appending $\mathcal{W}$ to $\mathcal{H}$, clears $\mathcal{W}$, and resets all \texttt{used} flags.
\text{READ} calls are always emitted, whereas \text{WRITE} calls are either replayed from $\mathcal{H}$ (if matched) or emitted and recorded into $\mathcal{W}$.

\begin{algorithm}[t!]
\caption{Effect-Aware Replay for Retry-Safe Semantics}
\label{alg:replay}
\begin{algorithmic}[1]
\Require $\mathcal{H}$: ordered history of completed \text{WRITE} calls across prior executions
\Require $\mathcal{W}$: ordered log of completed \text{WRITE} calls in the current execution
\Statex
\Procedure{OnReExecution}{}
    % \State $\mathcal{H} \gets \mathcal{H} \cup \mathcal{W}$ \Comment{Archive last execution}

    \State $\mathcal{H} \gets \mathrm{Concat}(\mathcal{H}, \mathcal{W})$ \Comment{Append last execution in order}
    \State $\mathcal{W} \gets \emptyset$
    \For{\textbf{each} entry $x \in \mathcal{H}$}
        \State $x.\texttt{used} \gets \textbf{False}$
    \EndFor
\EndProcedure
\Statex
\Function{HandleCall}{$(e,a,\textsf{eff})$}
    \If{$\textsf{eff}=\text{READ}$}
        \State \Return \Call{Emit}{$(e,a)$}
    \Else \Comment{$\textsf{eff}=\text{WRITE}$}
        \For{\textbf{each} entry $x \in \mathcal{H}$ \textbf{in order}}
            \If{$x.(e,a)=(e,a)$ \textbf{and} $\neg x.\texttt{used}$}
                \State $x.\texttt{used} \gets \textbf{True}$
                \State \Return $x.o$ \Comment{Replay outcome}
            \EndIf
        \EndFor
        \State $o \gets \Call{Emit}{(e,a)}$
        \State $\mathcal{W}.\texttt{append}((e,a,o))$
        \State \Return $o$
    \EndIf
\EndFunction
\end{algorithmic}
\end{algorithm}

\begin{proposition}[Retry-safe \text{WRITE} emissions]
\label{prop:exactly-once}
Fix an intent instance and a well-formed interface program $P$.
Across repair-driven re-executions that satisfy the replay discipline above, \sysname emits each completed dynamic \text{WRITE} call to the underlying service at most once, and later re-executions replay its cached outcome.
\end{proposition}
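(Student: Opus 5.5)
The argument is an induction over the sequence of executions $E_1, E_2, \ldots, E_K$ for the fixed intent instance. The key object is an invariant relating the logs to the emitted writes. Let $\mathcal{H}_k$ denote the history log at the start of $E_k$, after \textsc{OnReExecution}, with $\mathcal{H}_1=\emptyset$. I would prove:

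\begin{quote}
\textbf{Invariant.} $\mathcal{H}_k$ is exactly the ordered concatenation of the \text{WRITE} calls that were emitted to the service and completed during $E_1,\ldots,E_{k-1}$. Each such completed emission appears as exactly one entry $(e,a,o)$, with $o$ the outcome the service returned.
\end{quote}

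\textbf{Invariant from the code.} This follows from reading Algorithm~\ref{alg:replay}. In \textsc{HandleCall}, the only branch that calls \textsc{Emit} on a \text{WRITE} is followed immediately by appending $(e,a,o)$ to $\mathcal{W}$. A replayed write returns $x.o$ without emitting and without appending. Hence, at the end of $E_{k-1}$, $\mathcal{W}$ lists precisely the writes emitted in that execution. \textsc{OnReExecution} then concatenates $\mathcal{W}$ onto $\mathcal{H}$ and clears $\mathcal{W}$, which gives the inductive step. READ calls never touch either log, so they are irrelevant here; this matches the statement, which only concerns \text{WRITE}s.

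\textbf{At-most-once.} Fix a completed dynamic \text{WRITE} $w=(e,a)$, first emitted in some execution $E_j$. By the invariant it becomes an entry of $\mathcal{H}_k$ for every $k>j$. I need to show that in each later $E_k$ the dynamic occurrence corresponding to $w$ is matched to an unused entry rather than emitted.

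This is where the replay discipline enters. The discipline forbids a repaired program from changing the arguments or relative order of the committed \text{WRITE} prefix recorded in $\mathcal{H}$; any violating run falls back to stepwise calling and lies outside the statement. So in an admissible $E_k$, the sequence of dynamic \text{WRITE}s begins with the entries of $\mathcal{H}_k$, in order and with the same $(e,a)$ pairs.

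Then an inner induction on the position $r$ within that prefix establishes the following. When the $r$-th dynamic \text{WRITE} arrives, the entries of $\mathcal{H}_k$ marked \texttt{used} are exactly the first $r-1$ entries, because all flags were reset by \textsc{OnReExecution}. Consider the earliest unused entry with matching $(e,a)$. Every earlier entry is already used, and the $r$-th entry matches by the discipline. Therefore the ordered scan selects the $r$-th entry, marks it used, and returns its cached $o$.

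It follows that $w$ is never re-emitted in $E_k$ and the program observes the same outcome, which is the replay clause of the proposition. Since no entry of $\mathcal{H}_k$ is ever re-emitted, and duplicate $(e,a)$ pairs within a run are kept distinct by the \texttt{used} flags, each completed dynamic \text{WRITE} reaches the service at most once overall.

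\textbf{Main obstacle.} The delicate step is making precise what ``the replay discipline'' guarantees: that the committed prefix reappears as a prefix of the new run's dynamic \text{WRITE} sequence, with no interleaved fresh writes that share an $(e,a)$ with a later committed entry and could steal it. I would formalize the discipline exactly as this prefix condition, reading ``relative order'' strictly, and state it explicitly as a hypothesis. I would also note that a write which failed before completing never entered $\mathcal{W}$, so re-emitting it is permitted and consistent with the ``completed'' qualifier in the statement.
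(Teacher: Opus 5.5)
Your proof is correct and follows the same route as the paper's. The paper argues that a WRITE is emitted only when no unused matching entry exists, that the emitted outcome is archived into $\mathcal{H}$, and that in later re-executions the corresponding dynamic WRITE matches that entry and is replayed. Your invariant on $\mathcal{H}_k$ and the inner induction on the \texttt{used} flags, with the replay discipline stated explicitly as a strict prefix condition, supply the justification the paper's two-line proof omits for why that later occurrence must hit an unused entry.
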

\noindent\textit{Proof.}
The only emission of a \text{WRITE} occurs when no unused matching entry exists in $\mathcal{H}$ (Lines 9--16), in which case the emitted outcome is recorded and later archived into $\mathcal{H}$.
In subsequent re-executions, the same dynamic \text{WRITE} must match an unused entry and be replayed (lines~10--13), so it cannot be emitted again.
\hfill$\square$

Moreover, practical services often provide idempotency keys or exhibit nondeterminism, which affects replay matching.
When idempotency keys are available, \sysname includes them in argument $a$, strengthening matching and aligning replay with the service’s own exactly-once intent.
If an endpoint is meaningfully nondeterministic under identical $(e,a)$ and no idempotency is available, \sysname conservatively falls back to stepwise calling, as replay could otherwise lead to different results compared to stepwise execution.

\subsection{Profile-Driven Consolidation Policy}

The remaining challenge (C3) is \emph{when} to pay the one-time build cost of program execution.
While tool programs can reduce client-server turns, their construction (synthesis, projection, compilation, and possible repair) is nontrivial, so consolidation should be used only when it is predicted to reduce end-to-end cost.
\sysname therefore uses lightweight online profiling and decision rules to choose between the tool program and stepwise calling.

To make this choice instance-adaptive, \sysname maintains moving averages from recent runs for $\overline{T_{\textsc{rtt}}}$, $\overline{T_{\textsc{dec}}}$, and $\overline{T_{\textsc{build}}}$.
Here $T_{\textsc{rtt}}$ is the client--service round-trip time, $T_{\textsc{dec}}$ is per-step client-side decision overhead, and $T_{\textsc{build}}$ includes program synthesis, projection, compilation, and any repair time.
In addition, \sysname estimates $N$, the number of dynamic endpoint invocations in the candidate program, using synthesized structure and loop bounds when available.

Given these estimates, the policy follows the cost model in \autoref{eq:cost}.
Consolidation primarily saves $(N\!-\!1)$ additional RTTs and decision rounds, while endpoint execution time largely appears in both modes.
\sysname predicts the net benefit of program execution over stepwise calling as
\begin{equation}
\label{eq:benefit}
\Delta T \;=\; (N-1)\cdot(\overline{T_{\textsc{rtt}}}+\overline{T_{\textsc{dec}}}) \;-\; \overline{T_{\textsc{build}}}.
\end{equation}
If $\Delta T>0$, \sysname executes the tool program; otherwise it selects stepwise calling.

Because profiles may be inaccurate at cold start, \sysname bootstraps with a small number of stepwise runs to initialize $\overline{T_{\textsc{rtt}}}$ and $\overline{T_{\textsc{dec}}}$, and enables tool program execution only when the synthesized structure is clearly multi-step (e.g., a bounded loop).
Once estimates stabilize, \autoref{eq:benefit} is applied to make per-instance decisions.

This policy integrates naturally with the end-to-end control flow.
Given an intent, \sysname synthesizes a candidate program, estimates $\Delta T$, and selects the mode.
If program execution is chosen, the server runs projection, compilation, and bounded in-place repair under effect-aware execution. 
If repair exceeds the attempt budget, violates constraints, or encounters unsupported \text{WRITE} semantics, \sysname falls back to stepwise calling with diagnostics.
As a result, \sysname uses tool programs only when they are predicted to be beneficial in efficiency.

\section{Experiments}

\subsection{Implementation}

We instantiate \sysname over MCP-style tool-facing services and implement the service-side runtime using WebAssembly (Wasm).
On the client, an LLM synthesizes a tool program $P$ from the intent and tool specifications and applies lightweight interface checks; the consolidation policy selects between program mode and stepwise calling.
On the server, \sysname projects $P$ into the constrained interface-program surface via $\Pi(\cdot)$, compiles and executes it in a sandbox with bounded in-place repair, and enforces retry-safe \text{WRITE} semantics by mediating every dynamic \textsc{Call} and replaying completed \text{WRITE} outcomes across re-executions.
If projection/repair violates constraints or exceeds the attempt budget, \sysname falls back to stepwise calling with diagnostics.

We use Wasm as the execution substrate because it provides (i) a strong sandbox boundary for untrusted, LLM-generated code with no ambient authority, (ii) a capability-style host interface that lets us expose only the unified $\textsc{Call}(e,a,\mathsf{eff})$ stub and mediate all side effects, and (iii) portable, low-overhead execution suitable for short-lived procedural workloads.
Additional implementation details appear in Appendix~\ref{app:impl}.

\subsection{Experimental Setup}

\paraf{Benchmarks.}
We evaluate \sysname on three widely-used open-source applications (Memos, Directus, and MinIO) from GitHub, following prior studies~\cite{gu2025orfa}.
Details on these applications are shown in Appendix \ref{ap:benchmarks}.
Each application is exposed as an MCP-style tool-facing service with fixed endpoints, and we construct procedural workflows that require loops/conditionals and intermediate bindings.

For each application, we construct two workflows: a read-only workflow (suffix \texttt{.r}) and a read-write workflow (suffix \texttt{.w}).
Each workflow is parameterized by $N$, which controls procedural length (and equals the number of endpoint invocations in stepwise execution).
More details on the constructed workflows are shown in Appendix~\ref{ap:workflow}.
We also add supplemental realistic workflows in Appendix~\ref{ap:supplemental_workflows} to stress nondeterministic retrieval, branching, coordinated writes, non-idempotent effects, and cross-service execution.

\paraf{Metrics.}
We report end-to-end latency and client-side traffic volume, two widely used metrics~\cite{gu2025orfa}.
Latency measures the time to complete a workflow, including client-side LLM time, client-server communication, and server-side execution.
Client-side traffic measures bytes transmitted by the client during workflow execution, including both client-server payloads and client-to-LLM prompts.
For supplemental workflows, we also report task accuracy.
% \draft{For the supplemental workflows, we additionally report task accuracy because these tasks include runtime binding and state-consistency conditions.}

\paraf{Baselines and variants.}
We compare \sysname against the prevailing stepwise endpoint interface and include two \sysname variants to isolate the effect of consolidation.

\begin{itemize}[left=0pt,noitemsep,topsep=0pt,partopsep=0pt]
\item \textbf{Stepwise MCP Web Service (MWS).}
The prevailing fully stepwise baseline. At each step, the agent replans the next endpoint call $(e_i,a_i)$ given the intent, tool specs, and prior observations, leading to \textbf{$N$} requests and typically \textbf{$N$} LLM decision rounds with repeated tool-context transmission.

\item \textbf{\sysname-step.}
A stepwise \sysname variant (no consolidation) that replaces replanning with intent-structured guidance: it first derives a coarse call skeleton from the intent (e.g., required endpoints and loop/conditional structure), and then per step only instantiates/validates the next call using current observations (e.g., filling arguments from prior \text{READ} outputs), still incurring \textbf{$N$} requests/rounds but with lower per-step decision overhead and less repeated tool context.

\item \textbf{\sysname-prog.} \sysname-prog always executes in program mode. The agent submits one effect-typed tool program, which the server projects ($\Pi(\cdot)$), compiles/repairs, and executes under call interception.
This isolates the benefits of consolidation (turn reduction) while paying the one-time build cost.

\item \textbf{\sysname.} Full system with the profile-driven consolidation policy, selecting between program mode and stepwise mode per instance.
\end{itemize}

Details on experimental environments are shown in Appendix~\ref{ap:env}.

\begin{figure}[t!]
\begin{center}
\includegraphics[width=0.49\textwidth]{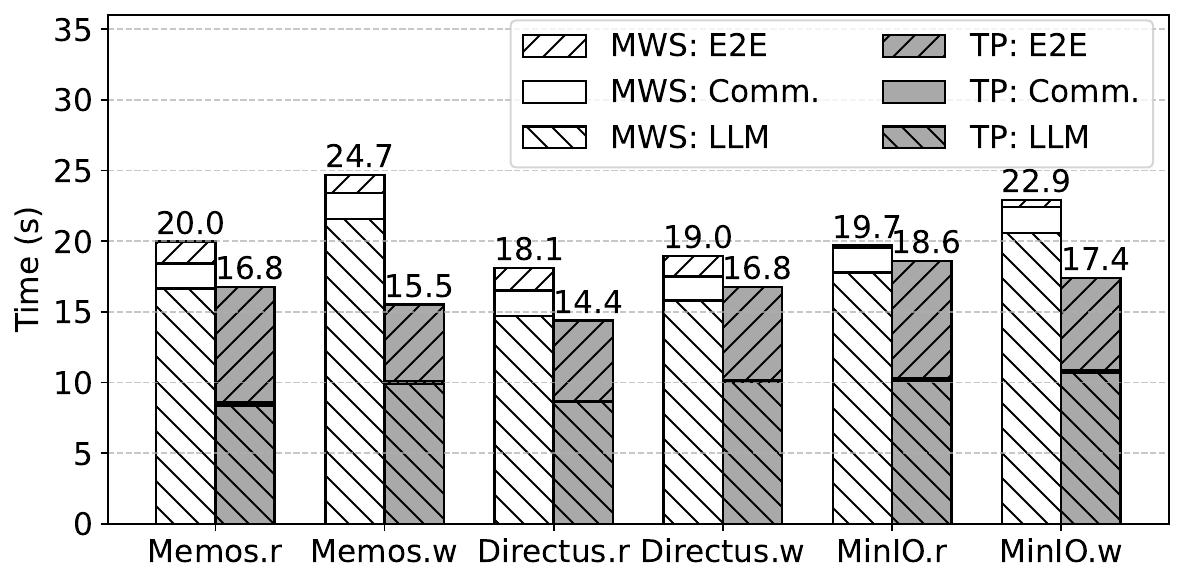}
\vspace{-2em}
\caption{Latency comparison. `TP' indicates \sysname. `E2E' is the end-to-end latency. `Comm.' is client-server communication latency. `LLM' is latency involving client-side LLM inference.}
\label{fig:latency}
\end{center}
\vspace{-0.8cm}
\end{figure}

\subsection{Overall Efficiency}

We first evaluate end-to-end efficiency by comparing \sysname against MWS across workflows with $N=10$.
Unless specified otherwise, the server is hosted in Sydney and the client in Beijing.

\paraf{Service latency.}
As shown in \autoref{fig:latency}, \sysname consistently reduces end-to-end latency across all workflows.
The main source of improvement is turn reduction: program mode packages a multi-step interaction into a single submission/execution cycle, avoiding the repeated client-side decide-next-call loop in MWS and reducing $(N-1)$ additional RTTs and reasoning rounds.
While \sysname incurs a one-time build cost (program synthesis, projection, compilation, and occasional bounded repair), this overhead is amortized for procedural workflows and higher RTT settings, consistent with the proposed policy model.
Latency improvements remain consistent for read-write workflows, indicating that enforcing retry-safe semantics does not dominate end-to-end cost.

\begin{figure}[t!]
\begin{center}
\includegraphics[width=0.49\textwidth]{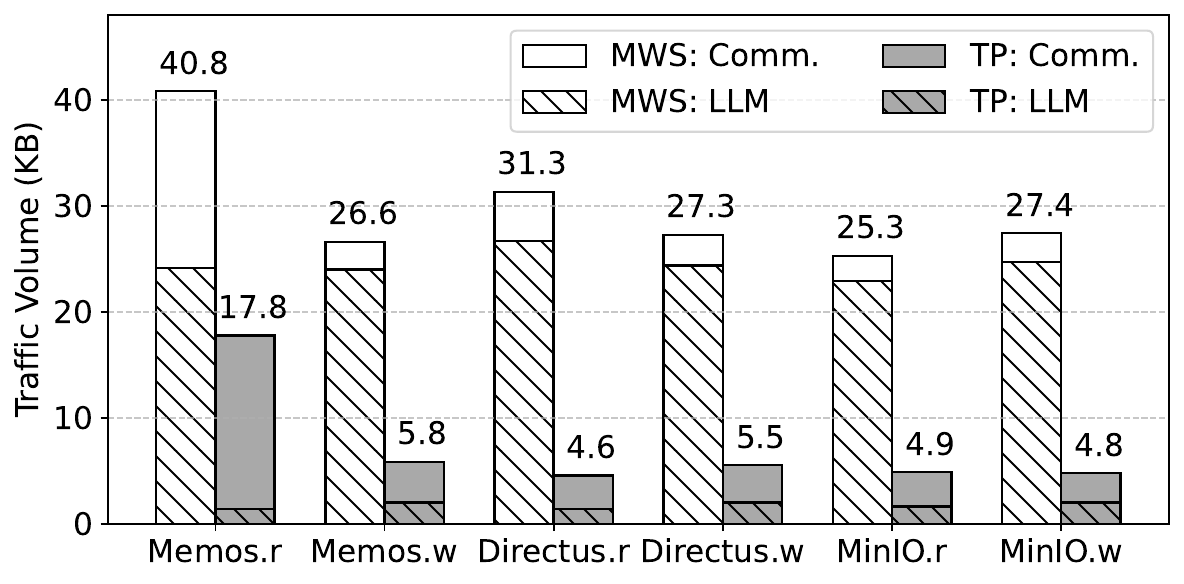}
\vspace{-2em}
\caption{Client-side traffic volume. `Comm.' denotes client-server bytes. `LLM' denotes client-to-LLM bytes.}
\label{fig:traffic}
\end{center}
\vspace{-0.8cm}
\end{figure}

\paraf{Client-side traffic volume.}
As shown in \autoref{fig:traffic}, \sysname reduces client-side traffic by up to 85.3\%.
This reduction is primarily driven by fewer client-to-LLM exchanges: MWS repeatedly transmits tool specifications and intermediate context across $N$ rounds, whereas program mode transmits a compact tool program once (plus bounded repair prompts when needed).
Client--server bytes may fluctuate because program mode uploads program source/bytecode, but the reduction in repeated prompting and over/under-fetching dominates for procedural workflows.

\paraf{Realistic workflows and reliability.}
% \draft{
To test whether the gains go beyond fixed-$N$ procedural loops, we add four supplemental complex benchmarks (cbench1--cbench4) that require nondeterministic retrieval, loop/branch logic, coordinated multi-record writes, non-idempotent side effects, and cross-service branching. Across cbench1--cbench3 with complex realistic workflows, \sysname reduces end-to-end latency from 30.16s to 17.91s (40.6\%), improves task accuracy from 0.60 to 0.93, and reduces client-side LLM latency from 14.98s to 7.08s (52.8\%). On the cross-service benchmark (cbench4), \sysname reduces latency from 52.68s to 24.54s (53.4\%), cuts client-side traffic by 96.1\%, and improves accuracy from 0.20 to 0.80.
%}

We observe that the program-mode failure rate is relevant to the coding capability of LLMs.
On the Rust-based cbench2, qwen3-coder-flash reaches 80\% success rate, while gpt-5.1 and gemini-3-flash-preview each reach 100\% success rate with no observed compilation failure or fallback. 
In addition, replay also matters operationally: over a 15-run no-replay ablation on cbench1--cbench3, disabling replay increases average latency from 17.92s to 21.45s (+19.7\%) and fallback from 0/15 to 3/15. Thus \sysname fails closed when replay safety cannot be guaranteed, and fallback does not erase the overall gains over complex realistic workflows.

\subsection{Sensitivity}

We next study how network conditions and workflow complexity affect performance, using \texttt{Memos.w}.

\paraf{Impact of network conditions.}
To vary network conditions, we deploy servers in London, Sydney, and San Francisco with a client in Beijing, and also inject additional one-way delays of 50ms, 100ms, 150ms, and 200ms.
We run with $N=10$.

As shown in \autoref{fig:network_condition}, \sysname outperforms MWS across network settings.
As RTT increases, the performance gap widens because consolidation saves $(N-1)$ additional round trips, and the policy increasingly favors program mode when the predicted benefit exceeds build cost.
In low-latency conditions, the policy more often selects stepwise mode to avoid paying the one-time build cost, yielding robust performance across conditions.
% \draft{
An extended 1--2000ms RTT sweep on \texttt{Memos.w} ($N=8$) makes the mode switch explicit. \sysname-step is better at 1--100ms (14.50--15.30s versus 15.51--15.60s for \sysname-prog), while \sysname-prog is better at 1000--2000ms (16.50--17.51s versus 22.48--30.50s for \sysname-step). The full \sysname policy stays close to the empirically better mode across the sweep, selecting stepwise execution in low-latency settings and program execution as RTT dominates.
%}

\paraf{Impact of workflow complexity.}
We vary workflow complexity by setting $N=5,10,15,20$ and compare MWS, \sysname-step, \sysname-prog, and \sysname, with the server hosted in Sydney and the client in Beijing.

\begin{figure}[t!]
\begin{center}
\includegraphics[width=0.49\textwidth]{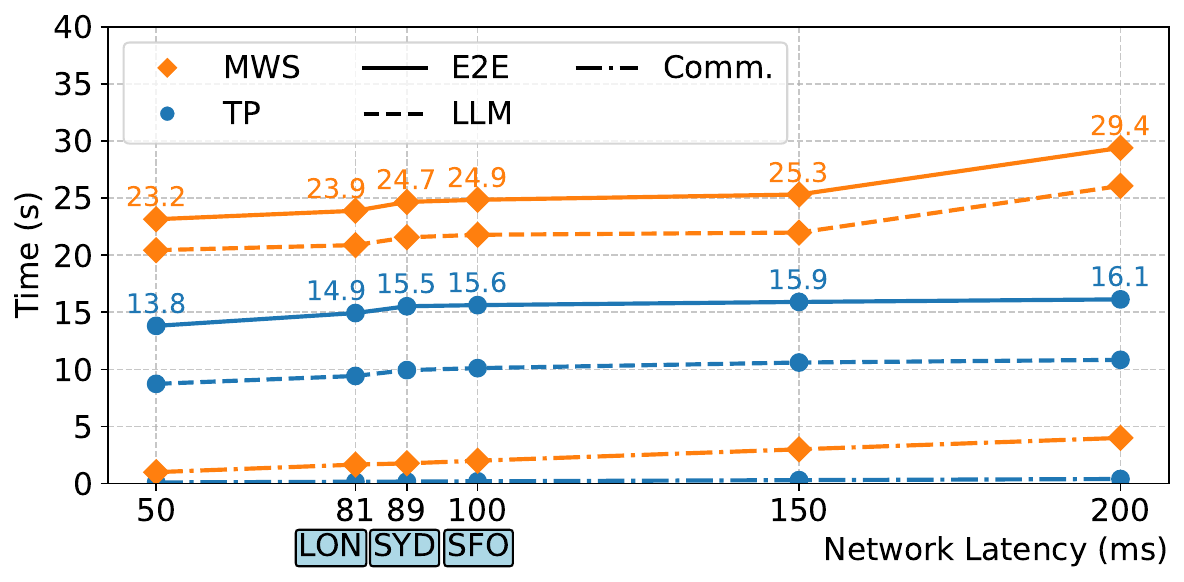}
\vspace{-2em}
\caption{Impact of varying network conditions.}
\label{fig:network_condition}
\end{center}
\vspace{-1em}
\end{figure}

\begin{figure}[t!]
\begin{center}
\includegraphics[width=0.49\textwidth]{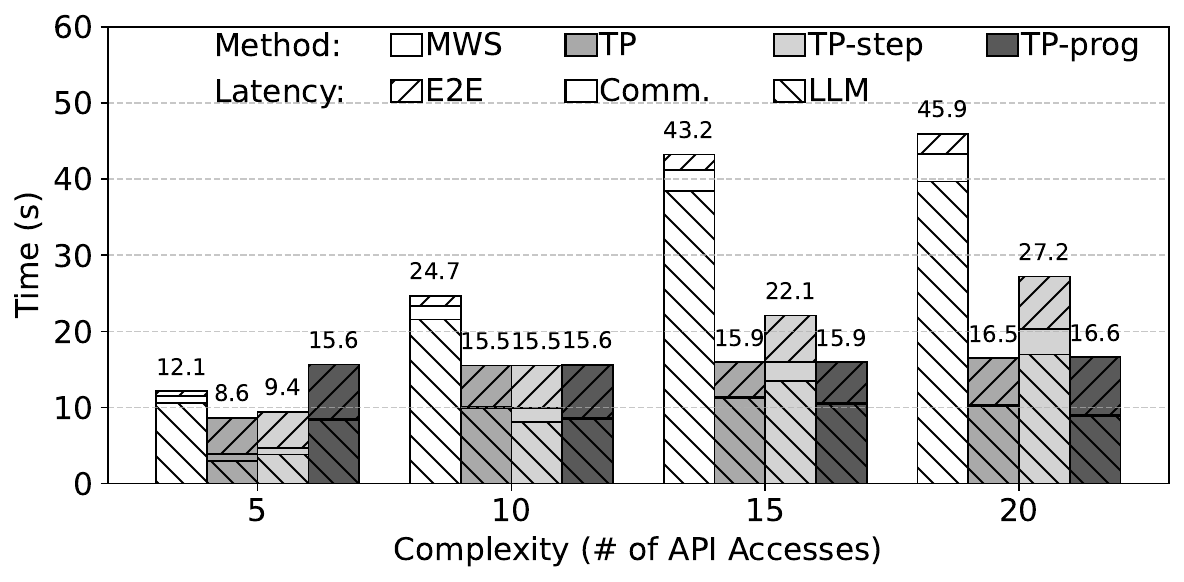}
\vspace{-2em}
\caption{Impact of workflow complexity on latency.}
\label{fig:complexity_latency}
\end{center}
\vspace{-1em}
\end{figure}

\begin{figure}[t!]
\begin{center}
\includegraphics[width=0.49\textwidth]{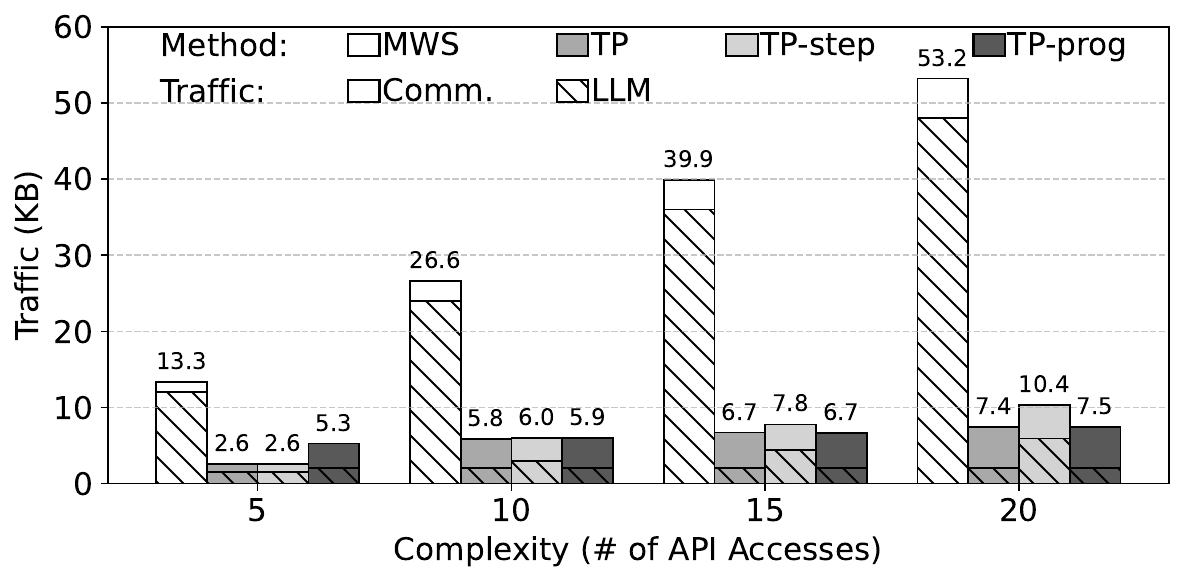}
\vspace{-2em}
\caption{Impact of workflow complexity on traffic volume.}
\label{fig:complexity_traffic}
\end{center}
\end{figure}

As shown in \autoref{fig:complexity_latency}, MWS scales roughly linearly with $N$ due to $N$ rounds of RTT and reasoning.
\sysname in stepwise mode improves modestly by reducing per-step decision overhead with intent-structured guidance, but still pays $N$ rounds.
In contrast, \sysname-prog remains comparatively stable with $N$ because it pays a one-time build cost and executes the interaction logic server-side.
The full \sysname achieves the best of both by switching between modes, validating the profile-driven policy.

As shown in \autoref{fig:complexity_traffic}, client-side traffic in MWS grows with $N$ due to repeated prompts and tool-context transmission.
\sysname-prog maintains low traffic for larger $N$ by avoiding multi-round prompting.
By switching between stepwise and program modes, \sysname minimizes client-side traffic across complexity conditions.

Moreover, we break down the overhead of \sysname-prog in \autoref{ap:overhead}.

\section{Related Work}

\paraf{Interfaces and representations for agent tool use.}
LLM-based agents increasingly solve tasks by invoking tools and APIs~\cite{shen2025shortcutsbench,liu2026roga,shen2025ragsynthsyntheticdatarobust,du2024anytool,song2025beyond,schick2023toolformer,qin2024toolllm}.
The dominant interface remains stepwise endpoints, where intent is implicit in a sequence of endpoint calls interleaved with multi-round reasoning.
This makes control flow and intermediate bindings an emergent property of the agent policy, inflating network turns and reasoning as workflows lengthen.
In contrast, \sysname makes intent explicit as an executable tool program with a constrained surface and effect types, enabling compilation, inspection, enforcement, and optimization at the interface boundary.

\paraf{From expressive queries to executable interaction logic.}
Web APIs evolved from RESTful endpoints~\cite{fielding2000architectural} to flexible query interfaces such as GraphQL~\cite{graphql}, reducing over/under-fetching by letting clients shape responses.
However, GraphQL is not designed to encode procedural interaction logic (loops, conditionals, retries) as a single interface object~\cite{graphql-indirect,hartig2018semantics,graphql-loop}.
Recent systems execute user-supplied logic near the service; e.g., ORFA~\cite{gu2025orfa} uses Wasm modules as a Turing-complete query language.
\sysname instead focuses on agentic workflows where the core bottlenecks are executability and side effects under re-execution, using LLM-synthesized, repairable tool programs with an effect-typed contract and retry-safe semantics under iterative repair.

% \draft{
\paraf{Programmatic agents and in-situ execution.}
Code-based agent methods such as CodeAct~\cite{wang2024codeact} use executable code as an agent action space, and workflow-generation methods such as AFlow~\cite{zhang2025aflow} search over code-represented agent workflows.
\sysname is complementary but targets a different boundary: the submitted program is an effect-typed service interface object, not only an internal reasoning/action representation, and the runtime must enforce replay-safe \text{WRITE} behavior across repair and re-execution.
The design is also analogous to in-situ programmable execution systems such as eBPF~\cite{hoiland2018express}, which move logic closer to the execution boundary to reduce repeated control transfers.
Unlike eBPF's pre-verified kernel packet programs, \sysname handles dynamically generated tool programs over web-service endpoints with explicit effect annotations, sandboxed execution, and fail-closed fallback.
%}

\paraf{Wasm sandboxing as a substrate.}
WebAssembly (Wasm) provides a portable sandbox with near-native performance~\cite{liu2025webanns,yan2021understanding,liu2025webassembly} and is widely adopted beyond browsers, including serverless and edge settings~\cite{hoque2023webassembly,menetrey2022webassembly,kjorveziroski2023webassembly,gackstatter2022pushing,zhang2024researchwebassemblyruntimessurvey}.
\sysname uses Wasm as a supporting substrate to securely execute the constrained tool-program surface.

\section{Conclusion}

We introduced \sysname, which makes \emph{tool programs} a first-class agent-facing service interface. By compiling effect-typed programs with projection and bounded repair, \sysname consolidates multi-step workflows into fewer turns while ensuring retry-safe execution via effect-aware replay. A profile-driven policy selects program execution only when it is predicted to reduce end-to-end cost. Across complex workflows within realistic applications, \sysname reduces latency and client-side traffic, with gains growing under higher RTT and increased workflow complexity. We believe \sysname is a practical step toward executable and effect-safe representations of tool intent for the agentic web.

\section*{Acknowledgements}
This work was supported by National Natural Science Foundation of China under the grant number 62595734, the Key Laboratory of High Confidence Software Technologies (Peking University), the Ministry of Education, and the Center for Data Space Technology and System, Peking University.

\section*{Impact Statement}

This paper presents work whose goal is to advance the field of Machine Learning. There are many potential societal consequences of our work, none of which we feel must be specifically highlighted here.

\bibliography{main}

@inproceedings{zhou2024webarena,
  title={Webarena: A realistic web environment for building autonomous agents},
  author={Zhou, Shuyan and Xu, Frank F and Zhu, Hao and Zhou, Xuhui and Lo, Robert and Sridhar, Abishek and Cheng, Xianyi and Ou, Tianyue and Bisk, Yonatan and Fried, Daniel and others},
  booktitle={Proceedings of the 2024 International Conference on Learning Representations (ICLR 2024)},
  pages={15585--15606},
  year={2024}
}

@article{deng2023mind2web,
  title={Mind2web: Towards a generalist agent for the web},
  author={Deng, Xiang and Gu, Yu and Zheng, Boyuan and Chen, Shijie and Stevens, Sam and Wang, Boshi and Sun, Huan and Su, Yu},
  journal={Proceedings of the Advances in Neural Information Processing Systems (NeurIPS 2023)},
  pages={28091--28114},
  year={2023}
}

@article{yao2022webshop,
  title={Webshop: Towards scalable real-world web interaction with grounded language agents},
  author={Yao, Shunyu and Chen, Howard and Yang, John and Narasimhan, Karthik},
  journal={Proceedings of the Advances in Neural Information Processing Systems (NeurIPS 2022)},
  pages={20744--20757},
  year={2022}
}

@inproceedings{liu2024agentbench,
  title={Agentbench: Evaluating llms as agents},
  author={Liu, Xiao and Yu, Hao and Zhang, Hanchen and Xu, Yifan and Lei, Xuanyu and Lai, Hanyu and Gu, Yu and Ding, Hangliang and Men, Kaiwen and Yang, Kejuan and others},
  booktitle={Proceedings of the International Conference on Learning Representations (ICLR 2024)},
  pages={52989--53046},
  year={2024}
}

@inproceedings{qin2024toolllm,
  title={Toolllm: Facilitating large language models to master 16000+ real-world apis},
  author={Qin, Yujia and Liang, Shihao and Ye, Yining and Zhu, Kunlun and Yan, Lan and Lu, Yaxi and Lin, Yankai and Cong, Xin and Tang, Xiangru and Qian, Bill and others},
  booktitle={Proceedings of the International Conference on Learning Representations (ICLR 2024)},
  pages={9695--9717},
  year={2024}
}

@article{schick2023toolformer,
  title={Toolformer: Language models can teach themselves to use tools},
  author={Schick, Timo and Dwivedi-Yu, Jane and Dess{\`\i}, Roberto and Raileanu, Roberta and Lomeli, Maria and Hambro, Eric and Zettlemoyer, Luke and Cancedda, Nicola and Scialom, Thomas},
  journal={Proceedings of the Advances in neural information processing systems (NeurIPS 2023)},
  pages={68539--68551},
  year={2023}
}

@inproceedings{yao2023react,
  title={ReAct: Synergizing Reasoning and Acting in Language Models},
  author={Yao, Shunyu and Zhao, Jeffrey and Yu, Dian and Du, Nan and Shafran, Izhak and Narasimhan, Karthik and Cao, Yuan},
  booktitle={Proceedings of the International Conference on Learning Representations (ICLR 2023)},
  year={2023}
}

@inproceedings{liu2026roga,
title={{ROGA}: Scaling Generalist Agents for Office Productivity Tasks via Tool Generation},
author={Liu, Mugeng and Ma, Xiaojun and Xie, Yuhang and Chen, Qin and Liu, Xuanzhe and Ma, Yun},
booktitle={Proceedings of the Fourteenth International Conference on Learning Representations (ICLR 2026)},
year={2026}
}

@misc{graphql-indirect,
  author = {Stack-Overflow},
  title = {{GraphQL}: can you mutate the results of a query?},
  howpublished = {\url{https://stackoverflow.com/questions/52330018/graphql-can-you-mutate-the-results-of-a-query}},
  note = {Accessed: 2025-07-28},
  year = {2018}
}

@inproceedings{
shen2025shortcutsbench,
title={ShortcutsBench: A Large-Scale Real-world Benchmark for API-based Agents},
author={Haiyang Shen and Yue Li and Desong Meng and Dongqi Cai and Sheng Qi and Li Zhang and Mengwei Xu and Yun Ma},
booktitle={Proceedings of the Thirteenth International Conference on Learning Representations (ICLR 2025)},
year={2025}
}

@article{liu2025webassembly,
author = {Liu, Mugeng and Shen, Haiyang and Zhang, Yixuan and Mei, Hong and Ma, Yun},
title = {WebAssembly for Container Runtime: Are We There Yet?},
year = {2025},
journal={ACM Transactions on Software Engineering and Methodology (TOSEM 2025)},
pages = {22},
}

@inproceedings{du2024anytool,
  title={AnyTool: self-reflective, hierarchical agents for large-scale API calls},
  author={Du, Yu and Wei, Fangyun and Zhang, Hongyang},
  booktitle={Proceedings of the 41st International Conference on Machine Learning (ICML 2024)},
  pages={11812--11829},
  year={2024}
}

@inproceedings{song2025beyond,
  title={Beyond browsing: Api-based web agents},
  author={Song, Yueqi and Xu, Frank F and Zhou, Shuyan and Neubig, Graham},
  booktitle={Findings of the Association for Computational Linguistics: ACL 2025},
  pages={11066--11085},
  year={2025}
}

@inproceedings{liu2025webanns,
author = {Liu, Mugeng and Zhong, Siqi and Yang, Qi and Han, Yudong and Liu, Xuanzhe and Ma, Yun},
title = {WebANNS: Fast and Efficient Approximate Nearest Neighbor Search in Web Browsers},
year = {2025},
booktitle = {Proceedings of the 48th International ACM SIGIR Conference on Research and Development in Information Retrieval (SIGIR 2025)},
pages = {2483–2492}
}

@misc{shen2025ragsynthsyntheticdatarobust,
      title={{RAGSynth}: Synthetic Data for Robust and Faithful RAG Component Optimization}, 
      author={Haiyang Shen and Hang Yan and Zhongshi Xing and Mugeng Liu and Yue Li and Zhiyang Chen and Yuxiang Wang and Jiuzheng Wang and Yun Ma},
      year={2025},
      eprint={2505.10989},
      archivePrefix={arXiv},
}

@misc{graphql-loop,
  author = {Stack-Overflow},
  title = {Graphql loop through array and get all results. Stack Overflow.},
  howpublished = {\url{https://stackoverflow.com/questions/48321689/graphql-loop-through-array-and-get-all-results}},
  note = {Accessed: 2025-07-28},
  year = {2018}
}

@book{fielding2000architectural,
  title={Architectural styles and the design of network-based software architectures},
  author={Fielding, Roy Thomas},
  year={2000},
  publisher={University of California, Irvine}
}

@misc{graphql,
  author = {GraphQL},
  title = {The query language for modern {APIs}.},
  howpublished = {\url{https://graphql.org/}},
  note = {Accessed: 2025-07-28},
  year = {2015}
}

@misc{qwen3max,
    title = {{Qwen3-Max}: Just Scale it},
    author = {Qwen-Team},
    month = {September},
    year = {2025}
}

@inproceedings{gu2025orfa,
  title={{ORFA}: Exploring {WebAssembly} as a Turing Complete Query Language for Web {APIs}},
  author={Gu, Yuhao and Chen, Chunyu and Du, Jiangsu and Zhang, Xiaoxi and Zhang, Xianwei},
  booktitle={Proceedings of the ACM on Web Conference 2025 (WWW 2025)},
  pages={1856--1865},
  year={2025}
}

@inproceedings{gackstatter2022pushing,
  title={Pushing serverless to the edge with {WebAssembly} runtimes},
  author={Gackstatter, Philipp and Frangoudis, Pantelis A and Dustdar, Schahram},
  booktitle={Proceedings of the 22nd IEEE International Symposium on Cluster, Cloud and Internet Computing (CCGrid 2022)},
  pages={140--149},
  year={2022}
}

@article{kjorveziroski2023webassembly,
  title={{WebAssembly} as an enabler for next generation serverless computing},
  author={Kjorveziroski, Vojdan and Filiposka, Sonja},
  journal={Journal of Grid Computing},
  pages={34},
  year={2023}
}

@inproceedings{menetrey2022webassembly,
  title={{WebAssembly} as a common layer for the cloud-edge continuum},
  author={M{\'e}n{\'e}trey, J{\"a}mes and Pasin, Marcelo and Felber, Pascal and Schiavoni, Valerio},
  booktitle={Proceedings of the 2nd Workshop on Flexible Resource and Application Management on the Edge},
  pages={3--8},
  year={2022}
}

@article{hoque2023webassembly,
  title={{WebAssembly} for edge computing: Potential and challenges},
  author={Hoque, Mohammed Nurul and Harras, Khaled A},
  journal={IEEE Communications Standards Magazine},
  pages={68--73},
  year={2023}
}

@article{zhang2024researchwebassemblyruntimessurvey,
  title={Research on webassembly runtimes: A survey},
  author={Zhang, Yixuan and Liu, Mugeng and Wang, Haoyu and Ma, Yun and Huang, Gang and Liu, Xuanzhe},
  journal={ACM Transactions on Software Engineering and Methodology (TOSEM 2025)},
  pages={1--47},
  year={2025},
}

@inproceedings{yan2021understanding,
  title={Understanding the performance of webassembly applications},
  author={Yan, Yutian and Tu, Tengfei and Zhao, Lijian and Zhou, Yuchen and Wang, Weihang},
  booktitle={Proceedings of the 21st ACM Internet Measurement Conference},
  pages={533--549},
  year={2021}
}

@inproceedings{hartig2018semantics,
  title={Semantics and complexity of GraphQL},
  author={Hartig, Olaf and P{\'e}rez, Jorge},
  booktitle={Proceedings of the 2018 World Wide Web Conference (WWW 2018)},
  pages={1155--1164},
  year={2018}
}

@inproceedings{wang2024codeact,
  title={Executable code actions elicit better {LLM} agents},
  author={Wang, Xingyao and Chen, Yangyi and Yuan, Lifan and Zhang, Yizhe and Li, Yunzhu and Peng, Hao and Ji, Heng},
  booktitle={Proceedings of the Forty-first International Conference on Machine Learning (ICML 2024)},
  year={2024}
}

@inproceedings{zhang2025aflow,
  title={Aflow: Automating agentic workflow generation},
  author={Zhang, Jiayi and Xiang, Jinyu and Yu, Zhaoyang and Teng, Fengwei and Chen, Xionghui and Chen, Jiaqi and Zhuge, Mingchen and Cheng, Xin and Hong, Sirui and Wang, Jinlin and others},
  booktitle={Proceedings of the International Conference on Learning Representations (ICLR 2025)},
  pages={34040--34077},
  year={2025}
}

@inproceedings{hoiland2018express,
  title={The express data path: Fast programmable packet processing in the operating system kernel},
  author={Hoiland-Jorgensen, Toke and Brouer, Jesper Dangaard and Borkmann, Daniel and Fastabend, John and Herbert, Tom and Ahern, David and Miller, David},
  booktitle={Proceedings of the 14th international conference on emerging networking experiments and technologies},
  pages={54--66},
  year={2018}
}
\bibliographystyle{icml2026}

% \newpage
\appendix
% \onecolumn

\appendix

\newpage
\section{Implementation Details}
\label{app:impl}

This appendix provides additional details of \sysname's Wasm-based runtime, including the host interface, replay state management, and policy instrumentation.

\subsection{Runtime Architecture}

\sysname is a client-server system.
The client produces a candidate tool program $P$ in Rust and an intent-instance identifier, and then either (i) submits $P$ to the server for program-mode execution or (ii) performs stepwise calling.
In program mode, the server executes the following stages.
(1) \emph{Projection} $\Pi(\cdot)$ rewrites $P$ into the constrained interface-program surface.
(2) \emph{Compilation/repair} utilizes Rustc to compile the projected Rust program and performs bounded in-place repair using compiler diagnostics and runtime traces.
(3) \emph{Sandboxed execution} runs the resulting module while intercepting every dynamic \textsc{Call} to enforce program order and effect-aware replay.
If any stage exceeds budgets or violates constraints, the server triggers safe fallback to stepwise calling and returns diagnostics.

\subsection{Wasm Runtime Configuration and Host Interface}

The overall Wasm generation pipeline is shown in \autoref{fig:wasm}.
\sysname compiles the LLM-generated Rust program into a Wasm module and executes it using Wasmtime.
The module runs with no ambient authority. It cannot issue network requests, access the filesystem, or load dynamic libraries.
All external interaction is mediated through a minimal set of host imports.

\begin{figure}[h!]
\begin{center}
\includegraphics[width=0.49\textwidth]{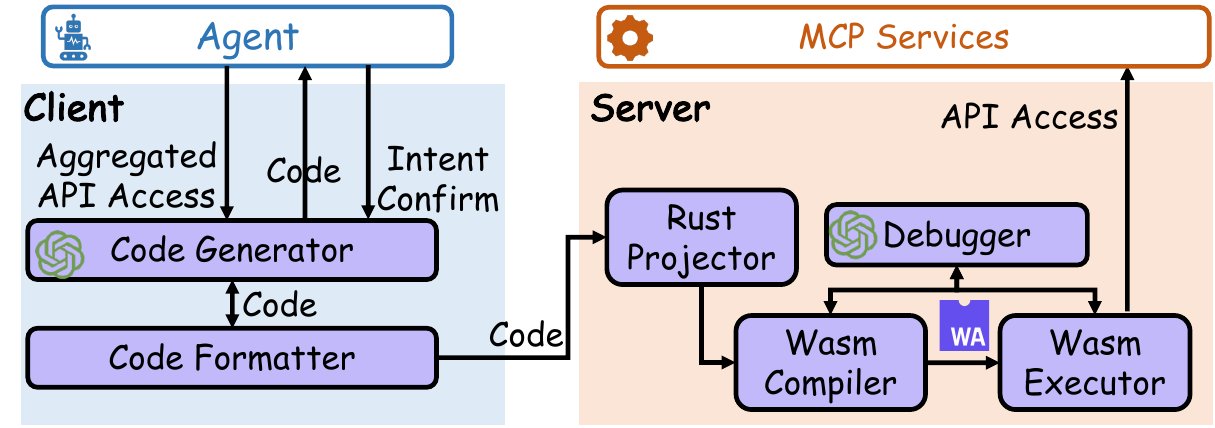}
\caption{Wasm generation pipeline of \sysname.}
\label{fig:wasm}
\end{center}
\end{figure}

\textbf{Unified call stub.}
The only capability needed by tool programs is the unified stub $\textsc{Call}(e,a,\mathsf{eff})$.
We implement this as a host-imported function that takes
(i) an endpoint identifier $e$ (interned string or integer id),
(ii) a serialized argument object $a$ (e.g., JSON bytes), and
(iii) an effect tag $\mathsf{eff}\in\{\text{READ},\text{WRITE}\}$.
The host returns a serialized result object or a typed error.
This import is the enforcement point. The mediator can log calls, enforce program order, check effect labels, and apply replay semantics without requiring the module to embed networking logic.

\textbf{Deterministic projection target.}
Projection $\Pi(\cdot)$ targets a stable surface that compiles deterministically to Wasm.
In particular, it rewrites all external interactions to the stub, rejects disallowed constructs (exceptions/threads/dynamic linking/unsafe features), and ensures effect annotations are explicit at each call site.
This makes compiler diagnostics and runtime traces comparable across repair iterations.

\subsection{Effect Mediation and Replay State}

Effect-aware replay is implemented inside the mediator that handles every dynamic \textsc{Call} from the Wasm module.
For each intent instance, the mediator maintains
(i) a history log $\mathcal{H}$ of completed \text{WRITE} calls from prior executions, and
(ii) a working log $\mathcal{W}$ for completed \text{WRITE} calls in the current execution.
Each entry stores $(e,a,o)$ and a per-run \texttt{used} flag.

\textbf{Re-execution protocol.}
Before each repair-driven re-execution, the mediator archives $\mathcal{W}$ into $\mathcal{H}$, clears $\mathcal{W}$, and resets all \texttt{used} flags.
During execution, \text{READ} calls are always forwarded to the service.
For a \text{WRITE} call with parameters $(e,a)$, the mediator matches it to the earliest unused entry in $\mathcal{H}$; if matched, it returns the cached outcome without emitting the call, otherwise it emits once and appends the outcome to $\mathcal{W}$.

\textbf{Fail-closed conditions.}
Replay is enabled only under the replay discipline.
If a repaired program attempts to revise a committed \text{WRITE} prefix (by changing arguments or relative order), replay is disabled, and the system falls back to stepwise calling with diagnostics.

% \subsection{Profiling and Policy Instrumentation}

% To support the profile-driven consolidation policy, \sysname instruments:
% (i) $T_{\textsc{rtt}}$ (client--service RTT),
% (ii) $T_{\textsc{dec}}$ (per-step client-side decision overhead in stepwise mode), and
% (iii) $T_{\textsc{build}}$ (program-mode build time including synthesis, projection, compilation, and bounded repair).
% The server maintains moving averages $\overline{T_{\textsc{rtt}}}$ and $\overline{T_{\textsc{build}}}$, and the client maintains $\overline{T_{\textsc{dec}}}$ in stepwise mode. These are combined to evaluate \autoref{eq:benefit}.
% The call-count estimate $N$ is derived from the synthesized structure when loop bounds are available, or from recent similar intents and workflows.

% \textbf{Auditing mode selection.}
% For each run, \sysname records $(\Delta T, N, \overline{T_{\textsc{rtt}}}, \overline{T_{\textsc{dec}}}, \overline{T_{\textsc{build}}})$ alongside the selected mode and outcome (success/fallback).
% This supports post-hoc analysis of mode-selection errors and helps tune bootstrapping thresholds under cold start.

\section{Experimental Details}

\subsection{Application Benchmarks}
\label{ap:benchmarks}

Details of employed applications are listed below.
\begin{itemize}[left=0pt,noitemsep,topsep=0pt,partopsep=0pt]
\item \textbf{Memos}\footnote{\url{https://github.com/usememos/memos}} (56k stars): A lightweight, self-hosted knowledge management platform in Go, exposing OpenAPI-compliant REST interfaces.
\item \textbf{Directus}\footnote{\url{https://github.com/directus/directus}} (34k stars): An API layer providing REST and GraphQL endpoints over SQL backends.
\item \textbf{MinIO}\footnote{\url{https://github.com/minio/minio}} (60k stars): An S3-compatible, high-performance object storage system.
\end{itemize}

\subsection{Constructed Workflows}
\label{ap:workflow}

\autoref{tab:workflow} shows the constructed procedural agentic workflows used in our experiments.

\begin{table}[h!]
\centering
\caption{Composed procedural workflows over fixed endpoints.}
\label{tab:workflow}
\begin{tabular}{p{1cm}<{\raggedright}|p{3cm}<{\raggedright}p{3cm}<{\raggedright}}
\toprule
\textbf{App} & \textbf{Read-Only} & \textbf{Read-Write} \\
&\textbf{Workflow (.r)}& \textbf{Workflow (.w)}\\
\midrule
Memos & Get memos of $N$ users (user\_id=1,2,\ldots,$N$). $N$ is the user number. & Change the visibility of $N$ memos. $N$ is the memo number. \\
\midrule
Directus & Get detailed information of $N$ articles (id=1,2,\ldots,$N$). $N$ is the article number. & Modify author information of $N$ draft articles. $N$ is the number of articles. \\
\midrule
MinIO & Download $N$ files. $N$ is the file number. & Upload $N$ files. $N$ is the file number. \\
\bottomrule
\end{tabular}
\end{table}

\subsection{Supplemental Realistic Workflows and Results}
\label{ap:supplemental_workflows}

The supplemental realistic benchmarks were added to measure \sysname under complex situations. They keep the same MCP-style endpoint setting but require runtime binding, branch-dependent control flow, state consistency across multiple records, and cross-service side effects.

\begin{center}
\footnotesize
\captionof{table}{Supplemental realistic workflows.}
\label{tab:supplemental_workflows}
\resizebox{\linewidth}{!}{
\begin{tabular}{p{0.16\linewidth}p{0.34\linewidth}p{0.40\linewidth}}
\toprule
\textbf{Bench} & \textbf{Core challenge} & \textbf{Representative task} \\
\midrule
cbench1 & Nondeterministic retrieval & Find customer Mei Patel by ZIP 76165, identify the correct pending order at runtime, and update that order to contain only item 1096508426 with total 55.0. \\
cbench2 & Loop, branching, and coordinated writes & Update all pending orders and the customer's primary address; if order \#W4082615 has not shipped, replace its items and set the total to 25.0. \\
cbench3 & Non-idempotent side effects & For delivered order \#D12345, change status to exchange\_requested, replace its items, and create an exchange log recording old and new items. \\
cbench4 & Cross-service branching & Upload five local documents to MinIO, inspect object text, route objects across prefixes, create follow-up memos when required, then clean up generated files and memos. \\
\bottomrule
\end{tabular}}
\end{center}

\paraf{Supplemental benchmark results.}
Across cbench1--cbench3, \sysname reduces average latency from 30.16s to 17.91s (40.6\%), improves task accuracy from 0.60 to 0.93, and lowers client-side LLM latency from 14.98s to 7.08s.
For cbench4, \sysname reduces latency from 52.68s to 24.54s (53.4\%), cuts client-side traffic by 96.1\%, and improves accuracy from 0.20 to 0.80.
The policy selects program mode in 12/15 cold-start runs and in all 12/12 warm-start runs after profiles are available; no fallback is observed once program mode is selected in these measurements.

\paraf{Replay and model comparison.}
In a 15-run no-replay ablation on cbench1--cbench3, disabling replay increases average latency from 17.92s to 21.45s and fallback from 0/15 to 3/15.
% Across six replay events, replay handling takes 0.89s on average, indicating that the main penalty without replay comes from conservative fallback rather than mediation overhead.
On cbench2, gpt-5.1 and gemini-3-flash-preview each reach 100\% success rate with no observed compilation failure or fallback, while qwen3-coder-flash reaches 80\%, indicating that compilation failures are primarily a model-capability bottleneck.

\begin{center}
\footnotesize
\captionof{table}{Extended RTT sweep on \texttt{Memos.w} with $N=8$.}
\label{tab:rtt_sweep}
\resizebox{\linewidth}{!}{
\begin{tabular}{rrrr}
\toprule
\textbf{RTT} & \textbf{\sysname-step} & \textbf{\sysname-prog} & \textbf{\sysname policy} \\
\midrule
1ms & 14.50s & 15.51s & 14.54s \\
10ms & 14.57s & 15.52s & 14.57s \\
100ms & 15.30s & 15.60s & 15.43s \\
1000ms & 22.48s & 16.50s & 16.22s \\
2000ms & 30.50s & 17.51s & 17.64s \\
\bottomrule
\end{tabular}}
\end{center}
This sweep makes the profile-driven switch explicit: stepwise execution is preferable at low RTT, while program execution dominates once RTT becomes the main cost.

\begin{center}
\footnotesize
\captionof{table}{Language comparison on cbench2. Rust is an implementation choice rather than a conceptual requirement.}
\label{tab:language_comparison}
\resizebox{\linewidth}{!}{
\begin{tabular}{lrrl}
\toprule
\textbf{Language} & \textbf{Avg. seconds} & \textbf{Accuracy} & \textbf{Note} \\
\midrule
Rust & 19.4052 & 0.80 & Mature direct-to-Wasm frontend in our setting. \\
Go & 50.2153 & 0.40 & More brittle direct-to-Wasm compilation in our setting. \\
Lua & N/A & N/A & No practical direct-to-Wasm path for this use case. \\
\bottomrule
\end{tabular}}
\end{center}
These results support using Rust in the prototype because it integrates cleanly with Wasmtime and provides actionable compiler diagnostics, while a lighter DSL/IR remains a promising future direction.

Generated tool programs remain compact in the supplemental runs: sampled programs have a median of 70 non-empty lines of code, 4 MCP tool calls, and 1 branch/loop keyword hit; the shortest and longest samples are 43 and 102 non-empty lines. The 102-line sample includes 6 tool calls (3 \text{READ}, 3 \text{WRITE}) plus a conditional branch, with explicit effect annotations at call sites.

\subsection{Environments}
\label{ap:env}

We run servers on AWS \texttt{c7i-flex.large} instances (Intel Xeon Sapphire Rapids @2.40GHz, 2 vCPUs, 4GB RAM, up to 12.5Gbps).
Servers are hosted in Sydney, London, and San Francisco.
The client runs in Beijing on a machine with an Intel Core i9-14900HX CPU, 16GB RAM, and a gigabit network connection.
\sysname is implemented in Python, utilizing the Wasmtime\footnote{\url{https://github.com/bytecodealliance/wasmtime}} for WebAssembly execution. 
\sysname uses Rust\footnote{\url{https://rust-lang.org/}} as the high-level language for Wasm compilation. 
We use Qwen3-Max-Instruct~\citep{qwen3max} for program synthesis and repair prompting on the client.
Unless otherwise specified, results are averaged over five runs.

\section{Overhead Breakdown}
\label{ap:overhead}

\begin{table}[ht!]
\centering
\caption{Breakdown of tool-program build pipeline latency (seconds and percentage of total end-to-end latency).}
\label{tab:overhead}
\resizebox{0.49\textwidth}{!}{
\begin{tabular}{lc|rrrr}
\toprule
\textbf{App} &\textbf{$N$} & \textbf{Program} & \textbf{Compila-} & \textbf{Sandbox} & \textbf{Service} \\
& & \textbf{Synthesis} & \textbf{tion} & \textbf{Execution} & \textbf{Execution} \\
\midrule
\multirow{4}{*}{Memos.r}
& 5 & 7.31 (41.9\%) & 5.39 (30.9\%) & 2.44 (14.0\%) & 2.06 (11.8\%) \\
& 10 & 7.32 (40.9\%) & 5.43 (30.3\%) & 2.46 (13.7\%) & 2.40 (13.4\%) \\
& 15 & 7.61 (41.9\%) & 5.39 (29.7\%) & 2.45 (13.5\%) & 2.58 (14.2\%) \\
& 20 & 7.61 (41.3\%) & 5.42 (29.4\%) & 2.47 (13.4\%) & 2.81 (15.2\%) \\
\midrule
\multirow{4}{*}{Memos.w}
& 5 & 7.31 (42.4\%) & 5.17 (30.0\%) & 2.47 (14.3\%) & 2.16 (12.5\%) \\
& 10 & 7.51 (42.0\%) & 5.21 (29.2\%) & 2.48 (13.9\%) & 2.20 (12.3\%) \\
& 15 & 7.62 (42.3\%) & 5.37 (29.8\%) & 2.47 (13.7\%) & 2.38 (13.2\%) \\
& 20 & 7.86 (41.3\%) & 5.45 (28.6\%) & 2.49 (13.1\%) & 3.09 (16.2\%) \\
\bottomrule
\end{tabular}
}
\end{table}

We break down the overhead of program mode by measuring \sysname-prog on \texttt{Memos.r} and \texttt{Memos.w} with $N=5,10,15,20$ (server in Sydney, client in Beijing).
\autoref{tab:overhead} shows that program synthesis dominates the one-time build cost, while compilation and sandbox execution remain stable across $N$.
This suggests that model specialization for the constrained interface-program surface (or more efficient synthesis/repair prompting) could further reduce overhead.
Importantly, sandbox execution includes effect mediation for \text{WRITE} calls, yet contributes a relatively small fraction of end-to-end time, indicating that retry safety can be enforced with modest overhead in practice.

%%%%%%%%%%%%%%%%%%%%%%%%%%%%%%%%%%%%%%%%%%%%%%%%%%%%%%%%%%%%%%%%%%%%%%%%%%%%%%%
%%%%%%%%%%%%%%%%%%%%%%%%%%%%%%%%%%%%%%%%%%%%%%%%%%%%%%%%%%%%%%%%%%%%%%%%%%%%%%%
% APPENDIX
%%%%%%%%%%%%%%%%%%%%%%%%%%%%%%%%%%%%%%%%%%%%%%%%%%%%%%%%%%%%%%%%%%%%%%%%%%%%%%%
%%%%%%%%%%%%%%%%%%%%%%%%%%%%%%%%%%%%%%%%%%%%%%%%%%%%%%%%%%%%%%%%%%%%%%%%%%%%%%%
% \newpage
% \appendix
% \onecolumn
% \section{You \emph{can} have an appendix here.}

% You can have as much text here as you want. The main body must be at most $8$
% pages long. For the final version, one more page can be added. If you want, you
% can use an appendix like this one.

% The $\mathtt{\backslash onecolumn}$ command above can be kept in place if you
% prefer a one-column appendix, or can be removed if you prefer a two-column
% appendix.  Apart from this possible change, the style (font size, spacing,
% margins, page numbering, etc.) should be kept the same as the main body.
%%%%%%%%%%%%%%%%%%%%%%%%%%%%%%%%%%%%%%%%%%%%%%%%%%%%%%%%%%%%%%%%%%%%%%%%%%%%%%%
%%%%%%%%%%%%%%%%%%%%%%%%%%%%%%%%%%%%%%%%%%%%%%%%%%%%%%%%%%%%%%%%%%%%%%%%%%%%%%%

\end{document}